\newtheorem{theorem}{Theorem}[section]
\newtheorem{lemma}[theorem]{Lemma}
\newtheorem{corollary}[theorem]{Corollary}
\newtheorem{proposition}[theorem]{Proposition}
\theoremstyle{definition}
\newtheorem{definition}[theorem]{Definition}
\theoremstyle{remark}
\newtheorem{remark}[theorem]{Remark}
\numberwithin{equation}{section}
\newcommand{\m}{\mathcal }
\newcommand{\tr}{\operatorname{Tr} }
\def\C{\mathbb C}    %%%%%%%%% the set of complex numbers
\def\R{\mathbb R}    %%%%%%%%% the set of real numbers
\newcommand{\bra}[1]{\langle #1 |}    %this is to get the correct brackets for a bra
\newcommand{\ket}[1]{| #1 \rangle}    %this is to get the correct brackets for a ket
\begin{document}

\title{Some geometric interpretations of quantum fidelity}  

\author{Jin Li}
\address{Department of Mathematics \& Computer Science, Brandon University, Brandon, MB, Canada R7A 6A9}

\author{Rajesh Pereira}
\address{Department of Mathematics \& Statistics, University of Guelph, Guelph, ON, Canada N1G 2W1}

\author{Sarah Plosker}
\address{Department of Mathematics \& Computer Science, Brandon University, Brandon, MB, Canada R7A 6A9}

\begin{abstract}
We consider quantum fidelity between two states $\rho$ and $\sigma$, where we fix $\rho$ and allow $\sigma$ to be sent through a quantum channel. We determine the minimal fidelity where one minimizes over (a) all unital channels, (b) all mixed unitary channels, and (c) arbitrary channels. We derive results involving the minimal eigenvalue of $\rho$, which we can interpret as a convex combination coefficient. As a consequence, we give a new geometric interpretation of the minimal fidelity with respect to the closed, convex set of density matrices and with respect to the closed, convex set of quantum channels. We further investigate the geometric nature of fidelity by considering density matrices arising as normalized projections onto  subspaces; in this way, fidelity can be viewed as a geometric measure of distance between two spaces. We give a connection between fidelity and the canonical (principal) angles between the subspaces.

\end{abstract}

\keywords{fidelity, transition probability, majorization, canonical (principal) angles, quantum channels, unitary orbits, eigenvalues}
\subjclass[2010]{15A18, 15A42, 81P40}

\maketitle

\section{Introduction}

The quantum fidelity $F(\rho, \sigma)$ is a measure of the distance between two quantum states $\rho$ and $\sigma$ that quantifies the accuracy of state transfer through a channel; the ideal case being a fidelity value of 1, which represents perfect state transfer. Physically, one begins with  initial state $\rho$ at time 0, and allows the quantum system to evolve over time. At time $t$, one measures the overlap of the two states $\rho$ and $\sigma$; this overlap decreases over time due to the evolution and perturbation of the system.

Fidelity has been considered in the context of quantum communication via  unmeasured and unmodulated spin chains which are used to transmit quantum states \cite{Bose}, and  plays a role in quantum decision tree algorithms \cite{LB14}.   Geometric interpretations of fidelity have been given in \cite{MMPZ08, MZC08} and elsewhere, however our approach and results are distinctly different from  the literature at present.

Formally, we have the following definition:
\begin{definition}\label{defn:fidelity}
Let $\rho$ and $\sigma$ be two $n\times n$ positive semidefinite  matrices. The \emph{(quantum) fidelity} between $\rho$ and $\sigma$ is
\begin{eqnarray*}
F(\rho, \sigma)&=&\tr(\sqrt{\sqrt{\rho}\sigma\sqrt{\rho}})\\
&=&\tr(\sqrt{\sqrt{\sigma}\rho\sqrt{\sigma}}).
\end{eqnarray*}
\end{definition}
Any positive semidefinite matrix has a unique positive square root, and so the quantum fidelity between two quantum states (density matrices) is well-defined and yields a non-negative real number.

The transition probability between two states, which is the square of the fidelity, was defined in \cite{U76}, although the idea stems from two earlier papers \cite{K48, B69} in a more general context. Jozsa \cite{J94} proposed four axioms that the transition probability (which he called fidelity) must satisfy (we have re-written these axioms in terms of fidelity)\footnote{The language in the literature is not consistent. Some authors take the point of view of Jozsa: transition probability = fidelity =$F^2(\rho, \sigma)$, using the notation of definition \ref{defn:fidelity}. They then call $F(\rho, \sigma)$ the square root fidelity.}

\begin{enumerate}
\item $0\leq F(\rho, \sigma)\leq 1$ with $F(\rho, \sigma)=1$ iff $\rho=\sigma$;
\item The fidelity is symmetric: $F(\rho, \sigma)=F(\sigma, \rho)$;
\item If $\rho=|\psi\rangle\langle\psi|$ is a pure state, then $F(\rho, \sigma)=\sqrt{\bra{\psi}\sigma\ket{\psi}}$;
\item \label{ax4} The fidelity is invariant under unitary transformations on the state space:
\[F(U\rho U^\dagger, U\sigma U^\dagger)=F(\rho, \sigma) \quad \textnormal{ for any unitary } U,\]
where $^\dagger$ represents complex conjugate transposition. 
\end{enumerate}

Note that, if both  $\rho=|\psi\rangle\langle\psi|$ and $\sigma=|\phi\rangle\langle\phi|$ are pure states, then the quantum fidelity $F(\rho, \sigma)$ reduces to $|\langle \psi|\phi\rangle|$. Supposing $\ket{\psi}$ and $\ket{\phi}$ are unit vectors in $\R^n$ (rather than in $\C^n$), we can write $\ket{\psi}=(\sqrt{p_1}, \dots, \sqrt{p_n})$ and  $\ket{\phi}=(\sqrt{q_1}, \dots, \sqrt{q_n})$, where $p=\{p_j\}$ and $q=\{q_j\}$ are two probability distributions. This then yields the classical fidelity
 between the two probability distributions $p=\{p_j\}$ and $q=\{q_j\}$, which  is defined as $F(p, q)=\sum_j\sqrt{p_jq_j}$. The term classical fidelity is used in quantum information theory; outside of QIT, classical fidelity is referred to as the \emph{Bhattacharyya coefficient}.

The quantum fidelity between unitary orbits of two density matrices $\rho$ and $\sigma$ is $F(V\rho V^\dagger, W\rho W^\dagger)$ for unitary $V, W$, which, in light of axiom (\ref{ax4}), reduces to $F(\rho, U\sigma U^\dagger)$ for unitary $U$.
The   maximum and minimum quantum fidelity between the unitary orbits of $\rho$ and $\sigma$ have been characterized  as follows:
\begin{theorem}\cite{MMPZ08, ZF14}\label{thm:ZF}
The quantum fidelity between unitary orbits $\m U_\rho$ and $\m U_\sigma$ satisfies the following relations:
\begin{eqnarray*}
\max_{U\in U(H_d)}F(\rho, U\sigma U^\dagger)&=&F(\lambda^\downarrow(\rho), \lambda^{\downarrow}(\sigma))\\
\min_{U\in U(H_d)}F(\rho, U\sigma U^\dagger)&=&F(\lambda^\downarrow(\rho), \lambda^{\uparrow}(\sigma)),
\end{eqnarray*}
where $U(H_d)$ is the set of all $d\times d$ unitary matrices on a $d$-dimensional Hilbert space $H_d$ and $\lambda^{\downarrow}(\rho)$ (respectively, $\lambda^{\uparrow}(\rho)$), is the vector of eigenvalues of $\rho$, listed in non-increasing (respectively, non-decreasing) order, including multiplicities.
\end{theorem}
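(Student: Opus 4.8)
The plan is to recast fidelity as a trace norm and reduce the optimization to a statement about two diagonal matrices sandwiching a free unitary. First I would record the identity $F(\rho,\sigma)=\|\sqrt{\rho}\sqrt{\sigma}\|_1$, where $\|\cdot\|_1$ is the trace norm: writing $X=\sqrt{\rho}\sqrt{\sigma}$ we have $\sqrt{\rho}\sigma\sqrt{\rho}=XX^\dagger$, so $\tr\sqrt{\sqrt{\rho}\sigma\sqrt{\rho}}$ is the sum of the singular values of $X$. Since $\sqrt{U\sigma U^\dagger}=U\sqrt{\sigma}U^\dagger$ and the trace norm is unitarily invariant, $F(\rho,U\sigma U^\dagger)=\|\sqrt{\rho}\,U\sqrt{\sigma}\|_1$. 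Using the spectral decompositions $\sqrt{\rho}=PD_aP^\dagger$ and $\sqrt{\sigma}=QD_bQ^\dagger$, where $D_a=\op{diag}(a_1,\dots,a_n)$ with $a_i=\sqrt{\lambda_i^\downarrow(\rho)}$ and $D_b=\op{diag}(b_1,\dots,b_n)$ with $b_i=\sqrt{\lambda_i^\downarrow(\sigma)}$, and absorbing the unitaries $P,Q$ into the trace norm, the problem becomes the optimization of $\|D_a\widetilde U D_b\|_1$ as $\widetilde U=P^\dagger U Q$ ranges over all unitaries. Since $\sum_i a_ib_i=F(\lambda^\downarrow(\rho),\lambda^\downarrow(\sigma))$ and $\sum_i a_ib_{n+1-i}=F(\lambda^\downarrow(\rho),\lambda^\uparrow(\sigma))$, the two claims reduce to a maximum and a minimum statement for $\|D_a\widetilde U D_b\|_1$.

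For the maximum I would use the dual description $\|M\|_1=\max_{V}\op{Re}\tr(VM)$ over unitary $V$. Expanding gives $\tr(VD_a\widetilde U D_b)=\sum_{i,k}a_kb_i V_{ik}\widetilde U_{ki}$, hence $|\tr(VD_a\widetilde U D_b)|\le\sum_{i,k}a_kb_iC_{ik}$ with $C_{ik}=|V_{ik}|\,|\widetilde U_{ki}|$. Two applications of Cauchy–Schwarz, using that the rows and columns of $V$ and of $\widetilde U$ are unit vectors, show that every row sum and every column sum of $C$ is at most $1$, i.e. $C$ is doubly substochastic. Because $a,b\ge 0$, the linear functional $\sum_{i,k}a_kb_iC_{ik}$ on the polytope of doubly substochastic matrices is maximized at a permutation matrix, and the rearrangement inequality (both $a$ and $b$ are nonincreasing) selects the identity, giving $\sum_i a_ib_i$. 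Equality is attained at $\widetilde U=I$, where $D_aD_b=\op{diag}(a_1b_1,\dots,a_nb_n)$, which settles the first identity.

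The minimum is the delicate step, and I expect it to be the main obstacle: the trace norm is intrinsically a maximum, so a matching lower bound does not come for free. The key point I would exploit is that a single, explicitly chosen dual element already suffices. Taking $V=\widetilde U^\dagger$ in the Hölder estimate $\op{Re}\tr(VM)\le\|M\|_1$ yields $\|D_a\widetilde U D_b\|_1\ge\tr(\widetilde U^\dagger D_a\widetilde U D_b)=\sum_i d_ib_i$, where $d_i=(\widetilde U^\dagger D_a\widetilde U)_{ii}\ge 0$. By the Schur–Horn theorem the diagonal $(d_i)$ is majorized by the eigenvalue vector $(a_i)$, so it lies in the permutohedron, namely the convex hull of the permutations of $(a_i)$. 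The linear functional $(d_i)\mapsto\sum_i d_ib_i$ is therefore minimized at a vertex, and since $b$ is nonincreasing the rearrangement inequality picks the order-reversing permutation, giving $\sum_i a_ib_{n+1-i}$. Equality holds when $\widetilde U$ is the reversal permutation, for which $D_a\widetilde U D_b$ is antidiagonal with singular values $a_ib_{n+1-i}$, proving the second identity. The crux of the whole argument is this reduction: recognizing that the trivial dual choice $V=\widetilde U^\dagger$, together with Schur–Horn, converts the operator inequality into a scalar rearrangement problem.
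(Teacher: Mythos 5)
Your argument is correct and complete, but note that the paper does not actually prove Theorem \ref{thm:ZF}: it is quoted from \cite{MMPZ08, ZF14} and used as a black box, so there is no in-paper proof to compare against. Judged on its own terms, your reduction $F(\rho,U\sigma U^\dagger)=\|D_a\widetilde U D_b\|_1$ is sound, the doubly-substochastic bound for the maximum is the von Neumann trace inequality in lightly disguised form (this is essentially how the maximum is obtained in the cited sources), and the minimum via the single dual element $V=\widetilde U^\dagger$ together with Schur's theorem (the diagonal of $\widetilde U^\dagger D_a\widetilde U$ is majorized by $(a_i)$) is a clean and correct way to land in the permutohedron and invoke rearrangement; the antidiagonal example shows the bound is attained. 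One small point worth making explicit: in the maximum step the passage from doubly substochastic to doubly stochastic matrices uses that every doubly substochastic matrix is entrywise dominated by a doubly stochastic one, which is what lets the nonnegativity of $a_kb_i$ carry the optimum onto the Birkhoff polytope. For comparison, the machinery the paper does develop (Proposition \ref{prop:Schur-concave} via Ostrowski's theorem, plus Uhlmann's majorization $\Phi(\sigma)\prec\sigma$) is aimed at the \emph{larger} class of unital channels in Corollary \ref{cor:min}; it establishes the lower bound in terms of the classical fidelities of the spectra but still leans on Theorem \ref{thm:ZF} for attainment at a unitary, so your self-contained proof of the unitary-orbit case is genuinely complementary to what the paper provides.
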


The quantum fidelity $F(\rho, \sigma)$  was originally found to satisfy the bounds of theorem \ref{thm:ZF} in \cite{MMPZ08}, although the result found in \cite{MMPZ08} was formulated in terms of the closely related Bures distance.

For practical purposes, one wishes to maximize  $F(\rho, \sigma)$; however, it is also useful to consider minimal fidelity, which represents the worst-case scenario of quantum information state transfer.

We generalize  theorem \ref{thm:ZF} by characterizing the following minimum quantum fidelities:
\begin{enumerate}
\item the minimum quantum fidelity $F(\rho, \Phi(\sigma))$, where $\Phi$ is  any quantum channel (completely positive, trace-preserving, linear map),
\item  the minimum quantum fidelity $F(\rho, \Phi(\sigma))$, where $\Phi$ is any unital channel (a quantum channel satisfying $\Phi(I)=I$), and
\item  the minimum quantum fidelity $F(\rho, \Phi(\sigma))$, where $\Phi$ is any mixed unitary channel (a  quantum channel of the form $\Phi(\rho)=\sum_ip_iU_i\rho U_i^\dagger$, where $U_i$ are unitaries and $p_i$ form a probability distribution).
\end{enumerate}
In our derivations, we take the point of view that $\rho$ is fixed (given) and  $\sigma$ is sent through the channel $\Phi$.
Our motivation is  the case where  one has access to the output $\Phi(\sigma)$ of the state $\sigma$ after it has been sent through a channel $\Phi$, but one does not have  direct access to $\sigma$. Thus, it is of interest to see   how far away  $\rho$ and $\sigma$ can become through the use of the channel $\Phi$.

 We will show (in corollary \ref{cor:min}) that if $\Phi$ is a unital channel, then the quantity $\min_{\Phi}F(\rho, \Phi(\sigma))$, where the minimum is taken over all quantum channels $\Phi$, reduces to $F(\lambda^\downarrow(\rho), \lambda^{\uparrow}(\sigma))=\min_{U\in U(H_d)}F(\rho, U\sigma U^\dagger)$ from theorem \ref{thm:ZF}.

Our methods make extensive use of majorization of vectors of eigenvalues. It is interesting to note that majorization is also used to characterize the more specialized situation when one entangled state can be transformed into another through the use of quantum operations described by local operations and classical communication (LOCC)  \cite{Nie99}.

The paper is organized as follows.  In section \ref{sec:qf} we review majorization, the main tool used in proving the results of this section, and we derive the minimum $F(\rho, \Phi(\sigma))$ where $\Phi$ is a quantum channel, under the restrictions listed above. In section \ref{sec:geom}, we give a geometric interpretation of our results.  In section \ref{sec:proj} we continue our geometric approach, this time focusing on density matrices arising as normalized projections onto  subspaces. It appears that studying fidelity in terms of projections and subspace geometry has not been done previously. We obtain a number of new interpretations of fidelity, including theorem \ref{thm:Fcanangles}, which  links fidelity with the canonical (principal) angles between the subspaces. Section \ref{sec:dis} is devoted to a discussion on various related topics, linking our work with related results on fidelity as well as results in other areas of mathematics.

\section{Quantum fidelity when one state is sent through a quantum channel}\label{sec:qf}

\subsection{Majorization}
\begin{definition}\label{defn:maj}  Let $x=(x_{1},x_{2},...,x_{d})$ and $y=(y_{1},y_{2},...,y_{d})$ be two $d$-tuples of real numbers. We say that $(x_{1},x_{2},...,x_{d})$ is majorized  by $(y_{1},y_{2},...,y_{d})$, written $x\prec y$,   if
\begin{eqnarray*}
\sum_{j=1}^{k}x^{\downarrow}_{j}\leq \sum_{j=1}^{k}y^{\downarrow}_{j}\quad 1\leq k \leq d,
\end{eqnarray*}
with equality for $k=d$.
\end{definition}
If equality does not necessarily hold when $k=d$, we say that $x$ is \emph{sub-majorized} by $y$ and we write $x\prec_w y$, where the $w$ stands for ``weak''.

If we order the components of the vectors in \emph{non-decreasing} order, indicated by $^\uparrow$, then
 $x$ is majorized by $y$
if
\begin{eqnarray*}
\sum_{j=1}^k x^{\uparrow}_j \geq \sum_{j=1}^k y^{\uparrow}_j\quad 1\leq k \leq d,
\end{eqnarray*}
with equality when $k=d$.  This definition is equivalent to the definition of majorization given above. If equality does not necessarily hold when $k=d$, we say that $x$ is \emph{super-majorized} by $y$ and we write $x\prec^w y$.

\subsection{Minimum Fidelity}
A function $f: \R^n\rightarrow \R$ is \emph{Schur-concave} if $x\prec y\Rightarrow f(x)\geq f(y)$.

Although there are several ways of proving proposition \ref{prop:Schur-concave}, we shall prove it using Ostrowski's theorem:
\begin{theorem} \cite{Ost52}, \cite[Theorem 3.A.7]{MOA11} Let $D=\{(x_1, \dots, x_n)\,|\, x_1\geq \cdots \geq x_n\}$.
Let $\phi$ be a real-valued function defined and continuous on $D$. Then
\[
\phi(x)\leq \phi(y)\textnormal{ whenever } x\prec^w  y\textnormal{ on } D
\]
if and only if
\[
0\geq \phi_1(z)\geq \cdots \geq \phi_n(z)\,\,\forall z\textnormal{ in the interior of } D,
\]
where $\phi_i(z)=\frac{\partial \phi(z)}{\partial z_i}$.
\end{theorem}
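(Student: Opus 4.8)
The plan is to prove both implications of the equivalence directly, first translating super-majorization into a statement about ``tail sums.'' Since every $x\in D$ is already sorted non-increasingly, $x^\uparrow$ is just $x$ read backwards, so on $D$ the relation $x\prec^w y$ is equivalent to $T_m(x)\geq T_m(y)$ for every $m$, where $T_m(v)=\sum_{i=m}^n v_i$ is the sum of the last $n-m+1$ coordinates. I would record this reformulation at the outset, since it is exactly what makes the ordered gradient condition visible.

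For the necessity direction ($\Rightarrow$), I would fix $z$ in the interior of $D$ and probe $\phi$ with two families of small perturbations that remain in $\mathrm{int}\,D$ and are super-majorized by $z$. Adding $t e_i$ (with $t>0$ small) to $z$ increases only the tail sums $T_m$ with $m\leq i$, so $z+te_i\prec^w z$, and the hypothesis gives $\phi(z+te_i)\leq\phi(z)$; dividing by $t$ and letting $t\to 0^+$ yields $\phi_i(z)\leq 0$. Replacing the perturbation by $t(e_{i+1}-e_i)$ raises only the single tail sum $T_{i+1}$, so again $z+t(e_{i+1}-e_i)\prec^w z$, and the same limiting argument gives $\phi_{i+1}(z)-\phi_i(z)\leq 0$. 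Together these produce $0\geq\phi_1(z)\geq\cdots\geq\phi_n(z)$.

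For the sufficiency direction ($\Leftarrow$), I would use the convexity of $D$: given $x\prec^w y$, the segment $z(t)=(1-t)y+tx$ lies in $D$, and
\[
\phi(x)-\phi(y)=\int_0^1 \sum_{i=1}^n \phi_i(z(t))\,(x_i-y_i)\,dt .
\]
It then suffices to show the integrand is $\leq 0$ for each fixed $z=z(t)$. Writing $g_i=\phi_i(z)$ and $d_i=y_i-x_i$, summation by parts against the tail sums gives
\[
\sum_{i=1}^n g_i d_i = g_1 T_1(d) + \sum_{i=2}^n (g_i-g_{i-1})\,T_i(d),
\]
where $T_m(d)=T_m(y)-T_m(x)\leq 0$ by the tail-sum form of $x\prec^w y$, while $g_1\leq 0$ and $g_i-g_{i-1}\leq 0$ by the derivative hypothesis. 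Every summand is thus a product of two non-positive factors, so $\sum_i g_i d_i\geq 0$, i.e.\ $\sum_i g_i(x_i-y_i)\leq 0$, which forces $\phi(x)\leq\phi(y)$.

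I expect the main obstacle to be purely technical and confined to the sufficiency direction: the segment $z(t)$ can meet the boundary of $D$ (where two coordinates coincide), and the partial derivatives $\phi_i$ are assumed to exist only on the interior, so the fundamental-theorem-of-calculus step is not immediately licensed. I would handle this by observing that each constraint $z_i(t)=z_{i+1}(t)$ is affine in $t$ and hence fails for all but finitely many $t$, so $z(t)\in\mathrm{int}\,D$ off a finite set; together with the continuity of $\phi$ on $D$ this justifies the integral representation after a routine limiting argument absorbing the measure-zero set of bad parameters. The algebraic core—matching the ordered, non-positive gradient against the non-positive tail sums via summation by parts—is then entirely routine once the tail-sum reformulation is in hand.
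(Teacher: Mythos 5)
The paper does not actually prove this statement: it is quoted verbatim from Ostrowski and from Marshall--Olkin (Theorem 3.A.7) and used as a black box in the proof of Proposition 2.3, so there is no in-paper argument to compare against. Judged on its own terms, your strategy is the standard one for this result. The tail-sum reformulation ($x\prec^w y$ on $D$ iff $T_m(x)\geq T_m(y)$ for all $m$, with $T_m(v)=\sum_{i=m}^n v_i$) is correct; the two perturbation families $z+te_i$ and $z+t(e_{i+1}-e_i)$ do remain in $D$ for small $t>0$, are super-majorized by $z$, and yield $\phi_i(z)\leq 0$ and $\phi_{i+1}(z)\leq\phi_i(z)$ as claimed; and the Abel-summation identity $\sum_i g_i d_i = g_1T_1(d)+\sum_{i=2}^n(g_i-g_{i-1})T_i(d)$ (with $T_{n+1}(d)=0$) is correct, so each summand is a product of two non-positive factors and the integrand is non-positive.

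The one genuine gap is in your handling of the boundary in the sufficiency direction. You assert that each constraint $z_i(t)=z_{i+1}(t)$, being affine in $t$, ``fails for all but finitely many $t$.'' An affine function of $t$ has at most one zero only when it is not identically zero, and here it is identically zero exactly when $x_i=x_{i+1}$ and $y_i=y_{i+1}$. In that case the \emph{entire} segment lies in the boundary of $D$ --- for instance $x=(2,2)$, $y=(1,1)$ with $n=2$ satisfies $x\prec^w y$, and every point of the segment has equal coordinates --- so the partial derivatives $\phi_i$, assumed to exist only on the interior, are unavailable along the whole path and the integral representation is meaningless as written. The standard repair is a perturbation rather than a measure-zero argument: put $w=(n,n-1,\dots,1)$ and set $x^\epsilon=x+\epsilon w$, $y^\epsilon=y+\epsilon w$. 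Then $T_m(x^\epsilon)-T_m(y^\epsilon)=T_m(x)-T_m(y)\geq 0$, so $x^\epsilon\prec^w y^\epsilon$, and every point $(1-t)y^\epsilon+tx^\epsilon=\bigl((1-t)y+tx\bigr)+\epsilon w$ has strictly decreasing coordinates, hence lies in the interior of $D$. Your argument then applies verbatim to give $\phi(x^\epsilon)\leq\phi(y^\epsilon)$, and letting $\epsilon\to 0^+$ with the assumed continuity of $\phi$ on all of $D$ gives $\phi(x)\leq\phi(y)$. This perturbation also absorbs the finitely-many-bad-points case, so the limiting argument you sketch can be dropped entirely.
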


\begin{proposition}\label{prop:Schur-concave}
Let $\{p_j \}_{j=1}^{n}$ be fixed non-negative numbers that sum to one. The function $f(q_1,...,q_n)=\sum_j\sqrt{p_j^{\uparrow}q_j^\downarrow}$ is Schur-concave.
\end{proposition}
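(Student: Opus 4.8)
The plan is to reduce to Ostrowski's theorem on the ordered cone $D$, after exploiting the symmetry built into $f$. First I would note that $f$ depends on $q=(q_1,\dots,q_n)$ only through its non-increasing rearrangement $q^\downarrow$, so $f$ is invariant under permutations of its arguments. Since the majorization relation $\prec$ is itself permutation-invariant, it suffices to verify the defining inequality $f(q)\geq f(q')$ for $q\prec q'$ with both $q,q'$ lying in $D=\{x_1\geq\cdots\geq x_n\}$ (and non-negative, so that the square roots are real). On $D$ the rearrangement is trivial, $q_j^\downarrow=q_j$, so writing $a_j:=p_j^\uparrow$ for the entries of $p$ in non-decreasing order, the function becomes $\phi(q):=f(q)=\sum_j\sqrt{a_j\,q_j}$, to which I can apply the stated theorem of Ostrowski with $\phi_i(q)=\partial\phi/\partial q_i=\tfrac12\sqrt{a_i/q_i}$.

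The key computation is the monotonicity of these partial derivatives. On the interior of $D$ the coordinates are strictly decreasing and positive, $q_1>\cdots>q_n>0$, so $1/\sqrt{q_i}$ is strictly increasing in $i$; meanwhile $a_i=p_i^\uparrow$ is non-decreasing in $i$ by construction. Hence $\phi_i=\tfrac12\sqrt{a_i}/\sqrt{q_i}$ is non-negative and non-decreasing in $i$, that is, $0\leq\phi_1\leq\cdots\leq\phi_n$. This is where the opposite pairing of $p^\uparrow$ against $q^\downarrow$ does all the work: pairing the smallest weights with the largest coordinates is exactly what forces the derivatives into increasing order (had we paired $p^\downarrow$ with $q^\downarrow$ the ordering would fail and the function would instead be Schur-convex, as in the maximum-fidelity case of Theorem \ref{thm:ZF}).

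Finally I would apply Ostrowski's theorem to $\psi:=-\phi$. Its partials satisfy $0\geq\psi_1\geq\cdots\geq\psi_n$ on the interior of $D$, so the theorem yields $\psi(x)\leq\psi(y)$---equivalently $\phi(x)\geq\phi(y)$---whenever $x\prec^w y$ on $D$. Since $x\prec y$ implies the super-majorization $x\prec^w y$, we obtain $\phi(x)\geq\phi(y)$ for all $x\prec y$ in $D$, which by the symmetry reduction of the first paragraph is precisely the Schur-concavity of $f$; note that full majorization is in fact stronger than Ostrowski requires here. The main obstacles I anticipate are bookkeeping rather than conceptual: lining up the sign and the direction of the inequalities (hence the passage to $-\phi$), and handling the domain carefully---continuity of $\phi$ on all of $D\cap\{q\geq 0\}$ including the boundary is immediate, while the derivatives $\phi_i$ blow up as $q_i\to 0^+$, but this is harmless because Ostrowski's criterion is imposed only on the interior.
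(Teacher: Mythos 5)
Your proposal is correct and follows essentially the same route as the paper: restrict to the ordered cone $D$, write the function as $\phi(q)=\sum_j\sqrt{p_j^{\uparrow}q_j}$, observe that $\phi_i=\tfrac12\sqrt{p_i^{\uparrow}/q_i}$ is non-negative and non-decreasing in $i$, and apply Ostrowski's theorem to $-\phi$ to get $\phi(x)\geq\phi(y)$ whenever $x\prec^w y$ (hence whenever $x\prec y$). Your write-up is in fact a bit more careful than the paper's, making explicit the permutation-invariance reduction to $D$ and the harmlessness of the derivative blow-up at the boundary.
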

\begin{proof}
We know that the square root function is concave, so the sum of the square root functions acting on each of the  components $q_1, \dots, q_n$ is Schur-concave.

Now, consider $\phi(z)=\sum_i\sqrt{p_i}\sqrt{z_i}$ where $z\in D$ so $z_1\geq \cdots \geq z_n$. In the minimum case, want $p_1\leq \cdots \leq p_n$. We find that $\phi_i(z)=\frac12\frac{\sqrt{p_i}}{\sqrt{z_i}}$, which increases as $i$ increases. Thus $-\phi$ satisfies Ostrowski's theorem. So if $x\prec^w y$ then $\phi(x)\geq \phi(y)$; that is, the function $f(q_1,...,q_n)=\sum_j\sqrt{p_j^{\uparrow}q_j^\downarrow}$ is Schur-concave. It follows that the absolute minimum of $\phi$ over any subset $S$ of $D$ if it exists must be at a point of $D$ which is maximal with respect to the  supermajorization order.
\end{proof}

We can use the theory of majorization to find the minimum fidelity between a fixed state $\rho$ and $\Phi(\sigma)$, the image of a second fixed state under any unital quantum channel.  We note that this result, while related to theorem \ref{thm:ZF}, is not a direct consequence of it since there exist unital quantum channels which are not the convex combination of unitary transforms \cite{LS90}. 

\begin{corollary}\label{cor:min}
If we consider unital $\Phi$, then we have
\[\min_{\Phi}F(\rho, \Phi(\sigma))=F(\lambda^\downarrow(\rho), \lambda^{\uparrow}(\sigma))\]
where the minimum is taken over all possible unital quantum channels $\Phi$.
\end{corollary}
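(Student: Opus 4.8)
The plan is to combine three ingredients: Theorem \ref{thm:ZF}, the Schur-concavity established in Proposition \ref{prop:Schur-concave}, and the standard fact that a unital trace-preserving (positive) map contracts the eigenvalue vector in the majorization order. The first thing I would record is this last fact: if $\Phi$ is a unital quantum channel and $\sigma$ is a density matrix, then $\lambda(\Phi(\sigma))\prec \lambda(\sigma)$, i.e.\ the eigenvalues of the output are majorized by those of the input. This is precisely the point flagged in the remark preceding the statement: although not every unital channel is a mixed unitary channel \cite{LS90}, so Theorem \ref{thm:ZF} cannot be applied directly, the majorization relation nevertheless persists for all unital channels, and this is what lets the argument go through.

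Next I would prove the lower bound. Fix a unital channel $\Phi$ and set $\tau=\Phi(\sigma)$, which is again a density matrix. Since $F(\rho,\tau)$ is at least the minimum of the fidelity taken over the entire unitary orbit of $\tau$, Theorem \ref{thm:ZF} gives $F(\rho,\Phi(\sigma))\geq \min_{U}F(\rho,U\tau U^\dagger)=F(\lambda^\downarrow(\rho),\lambda^\uparrow(\tau))$. Now observe that, with $p=\lambda(\rho)$ held fixed, the quantity $F(\lambda^\downarrow(\rho),\lambda^\uparrow(\tau))=\sum_j\sqrt{\lambda^\downarrow_j(\rho)\,\lambda^\uparrow_j(\tau)}$ is exactly $f(\lambda(\tau))$ for the function $f$ of Proposition \ref{prop:Schur-concave}, since pairing the largest eigenvalues of $\rho$ against the smallest of $\tau$ coincides, after relabelling the index $j\mapsto n+1-j$, with the $p^\uparrow q^\downarrow$ pairing appearing there. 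Because $f$ is Schur-concave and $\lambda(\tau)=\lambda(\Phi(\sigma))\prec\lambda(\sigma)$, the definition of Schur-concavity yields $f(\lambda(\tau))\geq f(\lambda(\sigma))=F(\lambda^\downarrow(\rho),\lambda^\uparrow(\sigma))$. Chaining these inequalities shows $F(\rho,\Phi(\sigma))\geq F(\lambda^\downarrow(\rho),\lambda^\uparrow(\sigma))$ for \emph{every} unital $\Phi$.

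Finally I would check attainment: every unitary conjugation $\sigma\mapsto U\sigma U^\dagger$ is itself a unital channel, so the minimum over all unital channels is no larger than $\min_{U}F(\rho,U\sigma U^\dagger)$, which by Theorem \ref{thm:ZF} equals $F(\lambda^\downarrow(\rho),\lambda^\uparrow(\sigma))$. Combined with the lower bound, this forces equality and completes the proof. The main obstacle is the majorization fact $\lambda(\Phi(\sigma))\prec\lambda(\sigma)$ for general unital channels; this is the crux of the argument and the only place where unitality (rather than the weaker mixed-unitary structure) is genuinely used. A secondary point requiring care is keeping the $\uparrow/\downarrow$ orderings consistent so that the minimum-fidelity expression coming from Theorem \ref{thm:ZF} is correctly identified with the Schur-concave function $f$ of Proposition \ref{prop:Schur-concave}.
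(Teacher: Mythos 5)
Your proof is correct and follows essentially the same route as the paper: the majorization relation $\lambda(\Phi(\sigma))\prec\lambda(\sigma)$ for unital channels, the Schur-concavity of the paired square-root sum from Proposition \ref{prop:Schur-concave}, and attainment at a unitary conjugation via Theorem \ref{thm:ZF}. If anything, you are slightly more careful than the paper, which leaves implicit the intermediate step $F(\rho,\Phi(\sigma))\geq F(\lambda^{\downarrow}(\rho),\lambda^{\uparrow}(\Phi(\sigma)))$ that you justify explicitly by applying Theorem \ref{thm:ZF} to the pair $(\rho,\Phi(\sigma))$.
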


\begin{proof}
Suppose $\rho$ and $\sigma$ are density matrices and $\Phi$ is a quantum channel. Then by Uhlmann's theorem, $\Phi(\sigma)\prec \sigma$, provided $\Phi$ is unital. Thus $f(\Phi(\sigma))\geq f(\sigma)$ for all Schur-concave functions $f$. In particular, we take the $f$ from proposition \ref{prop:Schur-concave} with $p_j=\lambda_j(\rho)$ to obtain
\begin{eqnarray*}
\sum_j\sqrt{\lambda_j^\uparrow(\rho)\lambda_j^\downarrow(\Phi(\sigma))}&\geq &\sum_j\sqrt{\lambda_j^\uparrow(\rho)\lambda_j^\downarrow(\sigma)}\\
\textnormal{i.e.\ } F(\lambda_j^\uparrow(\rho), \lambda_j^\downarrow(\Phi(\sigma)))&\geq &F(\lambda_j^\uparrow(\rho),\lambda_j^\downarrow(\sigma))\\
\textnormal{equivalently } F(\lambda_j^\downarrow(\rho), \lambda_j^\uparrow(\Phi(\sigma)))&\geq &F(\lambda_j^\downarrow(\rho),\lambda_j^\uparrow(\sigma)).
\end{eqnarray*}
Thus
$\min_{\Phi}F(\rho, \Phi(\sigma))$ (where the minimum is restricted to unital $\Phi$) is achieved precisely when $\Phi$ is the unitary transformation making the eigenvalues of $\sigma$ the same as those of $\rho$, with the eigenvalues lining up in the opposite direction, giving:
\begin{eqnarray*}
\min_{\Phi}F(\rho, \Phi(\sigma))=F(\lambda^\downarrow(\rho), \lambda^{\uparrow}(\sigma)).
\end{eqnarray*}

\end{proof}

\begin{proposition} \label{prop:minF} Let $H$ be a Hilbert space and $S(H)$ be the state space of $H$.  Let $\rho\in S(H)$ and $K$ be a subset of $S(H)$ containing all of the pure states in $S(H)$.  Then $\min_{\sigma \in K}F(\rho, \sigma)=(\lambda_{\min}(\rho))^{1/2}$, where $\lambda_{\min}(\rho)$ represents the minimal eigenvalue of $\rho$.  \end{proposition}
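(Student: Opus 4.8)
The plan is to reduce the problem to two facts: that fidelity restricted to pure states is a Rayleigh quotient, and that fidelity is concave in its second argument, so that its minimum over the convex set $S(H)$ is attained at an extreme point (a pure state).

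First I would evaluate the fidelity on pure states. By the symmetry axiom together with the pure-state axiom (axioms (2) and (3) of Jozsa), for a pure state $\sigma=\ket{\phi}\bra{\phi}$ we have $F(\rho,\ket{\phi}\bra{\phi})=F(\ket{\phi}\bra{\phi},\rho)=\sqrt{\bra{\phi}\rho\ket{\phi}}$. Minimizing over unit vectors $\ket{\phi}$, the Rayleigh--Ritz (min--max) characterization gives $\min_{\|\phi\|=1}\bra{\phi}\rho\ket{\phi}=\lambda_{\min}(\rho)$, with the minimum attained when $\ket{\phi}$ is a unit eigenvector of $\rho$ for the eigenvalue $\lambda_{\min}(\rho)$. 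Hence the minimum of $F(\rho,\cdot)$ over pure states equals $(\lambda_{\min}(\rho))^{1/2}$, and it is attained at a pure state which, by hypothesis, lies in $K$. This already yields $\min_{\sigma\in K}F(\rho,\sigma)\leq(\lambda_{\min}(\rho))^{1/2}$.

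For the reverse inequality, since $K\subseteq S(H)$ it suffices to show $F(\rho,\sigma)\geq(\lambda_{\min}(\rho))^{1/2}$ for \emph{every} state $\sigma$. Here I would invoke the (joint) concavity of the fidelity, i.e.\ $F(\rho,\sum_i q_i\sigma_i)\geq\sum_i q_i F(\rho,\sigma_i)$; equivalently, $\sigma\mapsto F(\rho,\sigma)$ is concave on the convex set $S(H)$. Writing a general $\sigma$ in its spectral decomposition $\sigma=\sum_i q_i\ket{\phi_i}\bra{\phi_i}$ and combining concavity with the pure-state bound above gives $F(\rho,\sigma)\geq\sum_i q_i\sqrt{\bra{\phi_i}\rho\ket{\phi_i}}\geq\sum_i q_i(\lambda_{\min}(\rho))^{1/2}=(\lambda_{\min}(\rho))^{1/2}$, using $\sum_i q_i=1$. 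Together with the previous paragraph this establishes the claimed equality.

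The step I expect to require justification rather than computation is the concavity of $F$ in its second argument: this is the standard joint concavity of the Uhlmann fidelity, which I would either cite or replace by the equivalent extreme-point principle, namely that a concave function on the convex body $S(H)$ attains its minimum at an extreme point, and the extreme points of $S(H)$ are exactly the pure states. A secondary point worth a remark is the meaning of $\lambda_{\min}(\rho)$ when $H$ is infinite-dimensional, where the infimum of $\bra{\phi}\rho\ket{\phi}$ over unit vectors need not be attained; in the finite-dimensional setting used throughout the paper the minimizing eigenvector exists and the corresponding pure state lies in $K$, so the minimum (rather than an infimum) is genuinely achieved.
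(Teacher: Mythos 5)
Your proposal is correct and follows essentially the same route as the paper: express $F(\rho,\cdot)$ on pure states as $\sqrt{\bra{\phi}\rho\ket{\phi}}$, apply the Courant--Fischer/Rayleigh--Ritz characterization to bound this below by $(\lambda_{\min}(\rho))^{1/2}$, and use concavity of the fidelity in its second argument together with a convex decomposition of $\sigma$ into pure states to extend the bound to all of $S(H)$. Your remark on the infinite-dimensional case is a reasonable extra caveat, but the argument itself matches the paper's.
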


\begin{proof} Any mixed state $\sigma$ can be represented as a convex combination of pure states: $\sigma=\sum_ip_i\ket{\psi_i}\bra{\psi_i}$. The quantum fidelity is concave in each of its variables \cite{U76, J94}, and so $F(\rho, \sigma )\geq \sum_i p_iF(\rho , \ket{\psi_i}\bra{\psi_i})$. Hence $\sqrt{\bra{\psi_0}\rho\ket{\psi_0}}=F(\rho, \ket{\psi_0}\bra{\psi_0})\leq F(\rho, \sigma)$ for at least one of the pure states $\ket{\psi_0}$. By the Courant-Fisher theorem, we minimize $\sqrt{\bra{\psi_0}\rho\ket{\psi_0}}$ as a function of $\ket{\psi_0}$ by choosing $\ket{\psi_0}$ to be the eigenvector corresponding to the minimal eigenvalue of $\rho$ which gives us $F(\rho, \sigma)\geq \sqrt{\bra{\psi_0}\rho\ket{\psi_0}}=(\lambda_{\min}(\rho))^{1/2}$.  Since $\ket{\psi_0}\bra{\psi_0}$ is a pure state, it is in $K$ and our result follows.
 \end{proof}

As a corollary we have the following result:

\begin{corollary}\label{prop:minev}
We have \[\min_{\Phi}F(\rho, \Phi(\sigma))=(\lambda_{\min}(\rho))^{1/2}\]
where  the minimum on the left hand side of the equation is taken over all possible quantum channels $\Phi$.
\end{corollary}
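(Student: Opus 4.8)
The plan is to deduce this immediately from Proposition \ref{prop:minF} by identifying the right set $K$. Take $K=\{\Phi(\sigma) : \Phi \textnormal{ a quantum channel}\}$, the image of the fixed state $\sigma$ under all completely positive trace-preserving maps. Since every such $\Phi$ preserves positivity and trace, each output $\Phi(\sigma)$ is again a density matrix, so $K\subseteq S(H)$. The only thing that needs checking is that $K$ contains every pure state, which is precisely the hypothesis Proposition \ref{prop:minF} requires of $K$.

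To verify this, for an arbitrary pure state $\ket{\psi}\bra{\psi}$ I would exhibit a single channel sending $\sigma$ to it. The natural candidate is the ``prepare'' (constant) channel $\Phi_\psi(X)=\tr(X)\,\ket{\psi}\bra{\psi}$. This map is trace-preserving and completely positive: choosing any orthonormal basis $\{\ket{j}\}$ and the Kraus operators $A_j=\ket{\psi}\bra{j}$, one checks that $\sum_j A_jXA_j^\dagger=\tr(X)\ket{\psi}\bra{\psi}$ and $\sum_j A_j^\dagger A_j=I$. Since $\tr(\sigma)=1$, we obtain $\Phi_\psi(\sigma)=\ket{\psi}\bra{\psi}\in K$. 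As $\ket{\psi}$ ranges over all unit vectors, every pure state lies in $K$.

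With $K$ so chosen, Proposition \ref{prop:minF} applies verbatim and gives $\min_{\Phi}F(\rho,\Phi(\sigma))=\min_{\tau\in K}F(\rho,\tau)=(\lambda_{\min}(\rho))^{1/2}$. There is essentially no obstacle beyond writing down this one explicit channel: the analytic content—the concavity of fidelity in its second argument and the Courant--Fischer minimization of $\sqrt{\bra{\psi_0}\rho\ket{\psi_0}}$—has already been carried out in Proposition \ref{prop:minF}, and the corollary is just the observation that the reachable set of a single fixed input under arbitrary channels is large enough to include all pure states.
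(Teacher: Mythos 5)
Your proof is correct and takes exactly the same route as the paper: define $K=\{\Phi(\sigma):\Phi\text{ a quantum channel}\}$, show it contains all pure states via the constant channel $\Phi_\psi(X)=\tr(X)\ket{\psi}\bra{\psi}$, and invoke Proposition \ref{prop:minF}. Your explicit Kraus-operator verification of complete positivity is a small addition the paper omits, but the argument is otherwise identical.
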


\begin{proof}
Let $K=\{\Phi(\sigma):  \Phi$ is a quantum channel$\}$.
We note that $K$ contains all pure states: To see this consider $F(\rho, \Phi(\sigma))$ and take $\Phi(\cdot)=\tr(\cdot)\ket{\psi}\bra{\psi}$. The map $\Phi$ is clearly completely positive, trace-preserving, and linear, and so $\Phi$ is a quantum channel and $\ket{\psi}\bra{\psi}\in K$.  The result now follows from the previous proposition.
\end{proof}

\begin{remark}
If we choose $\Phi$ to be the quantum channel $\Phi(\cdot)=\tr(\cdot)\rho$, then this choice gives us $F(\rho,\Phi(\sigma))=F(\rho, \rho)=1$, so the maximum value of $F(\rho, \Phi(\sigma))$ is one. It is therefore trivial to find the maximum of $F(\rho, \Phi(\sigma))$ when $\Phi$ is any quantum channel. The problem becomes interesting when we restrict to unital or to mixed unitary channels, since in these special cases we no longer have complete freedom. However, we do not have results such as proposition \ref{prop:Schur-concave} and the Courant-Fisher theorem at our disposal, so finding the maximum is not a straightforward task.
\end{remark}

\begin{corollary} If we consider mixed unitary channels $\Phi$
(all channels of the form $\Phi(\cdot)=\sum_jp_jU_j(\cdot)U_j^\dagger$ where $\{p_j\}$ is a probability distribution and $U_j$ are unitaries), then
\[\min_{\Phi}F(\rho, \Phi(\sigma))=F(\lambda^\downarrow(\rho), \lambda^{\uparrow}(\sigma))\]
where the minimum is taken over all mixed unitary channels $\Phi$.
\end{corollary}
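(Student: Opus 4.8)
The plan is to prove this by sandwiching the minimum over mixed unitary channels between two quantities that the earlier results have already shown to be equal; no new analysis is needed. The key observation is a pair of set inclusions among the three classes of channels under consideration. First I would note that every unitary channel $\Phi(\cdot)=U(\cdot)U^\dagger$ is a (degenerate) mixed unitary channel, obtained by taking a single summand with $p_1=1$. Second, every mixed unitary channel is unital, since
\[
\Phi(I)=\sum_j p_j U_j U_j^\dagger=\Big(\sum_j p_j\Big) I = I.
\]
Thus the unitary channels are contained in the mixed unitary channels, which are in turn contained in the unital channels.

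Because minimizing a fixed function over a larger set can only decrease or preserve the minimum, this nesting of the classes forces the reverse nesting of the minima. Concretely, I obtain the chain
\[
\min_{\Phi\text{ unital}}F(\rho,\Phi(\sigma))\ \leq\ \min_{\Phi\text{ mixed unitary}}F(\rho,\Phi(\sigma))\ \leq\ \min_{U\in U(H_d)}F(\rho,U\sigma U^\dagger).
\]
By corollary \ref{cor:min}, the left-hand side equals $F(\lambda^\downarrow(\rho),\lambda^\uparrow(\sigma))$, while by theorem \ref{thm:ZF} the right-hand side equals the very same quantity. The two outer terms therefore coincide, and the middle term is pinned to their common value, giving $\min_{\Phi\text{ mixed unitary}}F(\rho,\Phi(\sigma))=F(\lambda^\downarrow(\rho),\lambda^\uparrow(\sigma))$, as claimed.

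There is essentially no analytic obstacle here: all of the content has already been extracted in theorem \ref{thm:ZF} and corollary \ref{cor:min}, and what remains is purely an order-theoretic squeeze. The only point worth checking carefully is that the extremum is genuinely realized within the mixed unitary class rather than only in the larger unital class; but this too is immediate, since the minimizing channel furnished by theorem \ref{thm:ZF} is a single unitary conjugation, which is a mixed unitary channel. Hence the infimum over mixed unitary channels is actually attained, and the result follows without any limiting argument. The conceptual moral is that, although mixed unitary channels form a strictly smaller family than the unital channels in general \cite{LS90}, this gap is invisible to the minimal fidelity, because the extremal configuration for the unital problem already lives in the mixed unitary subclass.
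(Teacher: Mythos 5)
Your proof is correct, but it takes a genuinely different route from the paper's. The paper argues directly from concavity of the fidelity in its second argument: a concave function on a compact convex set attains its minimum at an extreme point, and since the mixed unitary channels are the convex hull of the unitary channels, the minimizer must be a single unitary conjugation, at which point theorem \ref{thm:ZF} finishes the job. You instead squeeze the mixed unitary minimum between the unital minimum (corollary \ref{cor:min}) and the unitary minimum (theorem \ref{thm:ZF}), observing that the inclusions $\{\text{unitary}\}\subseteq\{\text{mixed unitary}\}\subseteq\{\text{unital}\}$ reverse the order of the minima and that the two outer values coincide. Your argument is arguably more economical -- it needs no appeal to concavity or to the identification of extreme points, only monotonicity of minima under set inclusion -- but it leans on corollary \ref{cor:min} and hence on the full majorization machinery behind it, whereas the paper's extreme-point argument needs only theorem \ref{thm:ZF} and concavity, and would adapt to any convex family of channels whose extreme points are understood even when no matching outer bound is available. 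Your closing remark that the minimizer is itself a unitary conjugation, so the infimum is attained within the mixed unitary class, is a worthwhile point that the squeeze alone would not deliver.
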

\begin{proof}

The concavity of the quantum fidelity gives us that the minimum of $F(\rho, \Phi(\sigma))$ will occur at an extreme point $\Phi$ of the set of quantum channels. Thus, if we are considering the set of mixed unitary channels, then the minimum must occur at a unitary channel: a channel of the form $\Phi(\cdot)=U(\cdot)U^\dagger$. Thus, by theorem \ref{thm:ZF}, it follows that
\[\min_{\Phi}F(\rho, \Phi(\sigma))=F(\lambda^\downarrow(\rho), \lambda^{\uparrow}(\sigma))\]
where the minimum is taken over all mixed unitary channels $\Phi$.
\end{proof}

Again we stress that the maximum value of the fidelity   could potentially occur at any point, so finding the maximum is a much more difficult matter.

\section{Geometric Interpretation of Minimum Quantum Fidelity}\label{sec:geom}
The set of all states is a compact convex set. At its center is the maximally mixed state $\frac1nI$; its boundary is made up of all singular (non-invertible) density matrices $\omega$. Any state $\rho$ can be written as a convex combination
\[
\rho=p\omega+(1-p)\left(\frac1nI\right)
\]
for some $\omega$ on the boundary, where $0\leq p\leq 1$.

Similar convex combinations have been studied in quantum information theory, and in many other fields of mathematics and computer science. For instance, the set of all channels is a compact convex set. At its center is the completely depolarizing channel $\Omega:\rho\mapsto \frac1nI$; its boundary is made up of all channels $\Psi$ whose Choi matrix $C_{\Psi}$ is singular.
Recall that the Choi matrix $C_{\Phi}$ corresponding to a channel $\Phi$ is   defined by
    \[C_\Phi= \left (I_n\otimes\Phi \right ) \left (\sum_{ij}E_{ij}\otimes E_{ij} \right ) = \sum_{ij}E_{ij}\otimes\Phi(E_{ij}), \]
    where $E_{ij}$ are the matrix units.
The Choi matrix for a channel is singular precisely when the number of Kraus operators $V_i$ in the decomposition $\Phi(\rho)=\sum_{i=1}^k V_i\rho V_i^\dagger$ minimizing $k$ is strictly less than $n^2$. (Thus most channels that arise naturally are on the boundary of the set of all quantum channels). Any channel $\Phi$ can be written as a convex combination
\begin{eqnarray}\label{eq:Phi}
\Phi=p\Psi+(1-p)\Omega
\end{eqnarray}
for some $\Psi$ on the boundary, where $0\leq p\leq 1$.

A more specific example along these lines is that of \cite{Wat09}, where  if one can write
\[
\Phi=p\Psi+(1-p)\Omega
\]
for some unital quantum channel $\Psi$, where $0\leq p\leq \frac1{d^2-1}$, then $\Phi$ is a mixed unitary channel. Note here that the set of all mixed unitary channels forms a subset of the set of all unital channels, both sets are compact convex sets, and  $\Omega$ is their common centroid.

A  result along the same vein \cite{Mar05} gives   $p$ for which
\[
\Phi=p\omega+(1-p)\left(\frac1nI\right)
\]
is a real, rank-one correlation matrix,
where $\omega$ is a real correlation matrix (a positive semi definite matrix with 1's along the diagonal).

In \cite{OP13}, the authors consider a similar convex combination problem involving $H$-unistochastic and bistochastic matrices.

Relating this to the  results herein, the value of $\lambda_{\min}(\rho)$ tells us how close $\rho$ is to the maximally mixed state, or, equivalently, how close it is to the ``extreme'' states (singular density matrices). The value of  $\lambda_{\min}(\rho)$ gets larger as $\rho$ gets closer to the maximally mixed state, and smaller as $\rho$ gets closer to the boundary of singular density matrices. In this way,  $\min_{\sigma \in K}F(\rho, \sigma)$ of proposition \ref{prop:minF} measures how far away your state $\rho$ is  from the boundary.

Similarly, in the case of quantum channels, $\Phi$ maps $\sigma$ to a density matrix with larger and larger $\lambda_{\min}$ as $p\rightarrow 0$ in equation (\ref{eq:Phi}).

\section{Fidelity, projections and subspace geometry}\label{sec:proj}

Let $S$ be an $m$-dimensional subspace of a $d$-dimensional Hilbert space $H_d$ and $P_{S}$ be the orthogonal projection onto the subspace $S$, then $\rho_{S}=\frac1m P_{S}$ is a density matrix.  Let $T$ be an $n$-dimensional subspace of $H_d$. The main goal of this section is to examine the relationship between the geometry of   two subspaces $S$ and $T$ and the quantity $F(\rho_{S},\rho_{T})$.  

\begin{remark} In the special case where $\rho_{S}$ and $\rho_{T}$ commute, we have
\begin{eqnarray*}
F(\rho_{S},\rho_{T})&=&\tr\left(\sqrt{\left(\frac1m P_{S}\right)^{1/2}\frac1nP_{T}\left(\frac1m P_{S}\right)^{1/2}}\right)\\
&=&\frac1{(mn)^{1/2}}\tr(\sqrt{P_SP_TP_S})\\
&=&\frac{\tr\sqrt{P_{S\cap T}}}{\sqrt{m n}}\\
&=&\frac{\dim(S\cap T)}{\sqrt{mn}}
\end{eqnarray*}
 so it appears that in this case the fidelity measures the proportion of overlap between the two subspaces, giving 0 when $S$ and $T$ are disjoint, and 1 when $S=T$. 
  \end{remark}
  
  We note that this result can be used to find the maximum and minimum of $F(\rho_{S},\rho_{T})$ where $S$ and $T$ range over all $m$-dimensional and $n$-dimensional subspaces of $H_d$, respectively.  By theorem \ref{thm:ZF}, both the maximum and the minimum will occur at a choice of $S$ and $T$ for which $\rho_S$ and $\rho_T$ commute.  Since $\max(m+n-d,0)\le \dim(S\cap T)\le \min(m,n)$, we get the following result.

\begin{corollary} Let $S$ and $T$ be subspaces of $H_d$ with dimension $m$ and $n$ respectively and let $\rho_S=\frac{1}{m}P_S$ and $\rho_T=\frac{1}{n}P_T$.  Then $\frac{\max(m+n-d,0)}{\sqrt{mn}}\le F(\rho_{S},\rho_{T})\le \min(\sqrt{\frac{m}{n}},\sqrt{\frac{n}{m}})$. \end{corollary}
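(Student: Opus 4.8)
The plan is to reduce the problem to Theorem \ref{thm:ZF} by unitary invariance, and then read off the two extremal values from the commuting computation in the preceding remark. First I would observe that every scaled rank-$m$ projection $\rho_S=\frac1m P_S$ has the same spectrum, namely the eigenvalue $\frac1m$ with multiplicity $m$ and $0$ with multiplicity $d-m$; likewise every $\rho_T=\frac1n P_T$ has eigenvalue $\frac1n$ with multiplicity $n$ and $0$ otherwise. Thus, as $S$ and $T$ range over all subspaces of the prescribed dimensions, $\rho_S$ and $\rho_T$ range over the full unitary orbits of fixed reference projections $\rho_S^0,\rho_T^0$. Writing $\rho_S=V\rho_S^0 V^\dagger$ and $\rho_T=W\rho_T^0 W^\dagger$ and applying axiom (\ref{ax4}) with $U=V^\dagger W$ gives
\[
\min_{S,T} F(\rho_S,\rho_T)=\min_{U}F(\rho_S^0,U\rho_T^0 U^\dagger),\qquad \max_{S,T}F(\rho_S,\rho_T)=\max_{U}F(\rho_S^0,U\rho_T^0 U^\dagger),
\]
so Theorem \ref{thm:ZF} applies verbatim.

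Next, Theorem \ref{thm:ZF} identifies the extremal values as $F(\lambda^\downarrow(\rho_S),\lambda^\downarrow(\rho_T))$ (maximum, eigenvectors aligned) and $F(\lambda^\downarrow(\rho_S),\lambda^\uparrow(\rho_T))$ (minimum, eigenvectors anti-aligned). In each of these extremal configurations the two matrices share an eigenbasis, hence commute, so the computation in the preceding remark applies and yields $F(\rho_S,\rho_T)=\dim(S\cap T)/\sqrt{mn}$ at the extremizers. Since $x\mapsto x/\sqrt{mn}$ is increasing, maximizing (respectively minimizing) the fidelity over these commuting configurations is the same as maximizing (respectively minimizing) $\dim(S\cap T)$.

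Finally I would invoke the Grassmann dimension formula $\dim(S\cap T)=m+n-\dim(S+T)$ together with $\dim(S+T)\le d$ to obtain $\max(m+n-d,0)\le \dim(S\cap T)\le \min(m,n)$, noting that both extremes are attained: the upper bound by taking $S\subseteq T$ or $T\subseteq S$, and the lower bound by taking $S$ and $T$ in general position. Substituting these into $\dim(S\cap T)/\sqrt{mn}$ gives the lower bound $\frac{\max(m+n-d,0)}{\sqrt{mn}}$ and the upper bound $\frac{\min(m,n)}{\sqrt{mn}}=\min\left(\sqrt{\tfrac{m}{n}},\sqrt{\tfrac{n}{m}}\right)$, which are exactly the claimed quantities.

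The main obstacle is the step linking the eigenvalue picture of Theorem \ref{thm:ZF} to the geometric picture: one must check that the aligned and anti-aligned extremizers are genuinely realizable by subspace pairs and that the corresponding intersection dimensions really hit the endpoints $\min(m,n)$ and $\max(m+n-d,0)$. This is routine linear algebra, but it is the one place where care is needed to ensure nothing is lost in translation. As an alternative that sidesteps the remark altogether, I could compute $F(\lambda^\downarrow(\rho_S),\lambda^\downarrow(\rho_T))$ and $F(\lambda^\downarrow(\rho_S),\lambda^\uparrow(\rho_T))$ directly from the explicit spectra, simply counting the indices at which both factors under the square root are nonzero; this gives $\min(m,n)$ terms equal to $\frac{1}{\sqrt{mn}}$ in the aligned case and $\max(m+n-d,0)$ such terms in the anti-aligned case, reproducing the same bounds.
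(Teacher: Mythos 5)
Your proposal is correct and follows essentially the same route as the paper: both reduce to Theorem \ref{thm:ZF} to conclude that the extrema occur at commuting pairs $\rho_S,\rho_T$, then apply the commuting-case formula $F(\rho_S,\rho_T)=\dim(S\cap T)/\sqrt{mn}$ from the preceding remark together with the bound $\max(m+n-d,0)\le\dim(S\cap T)\le\min(m,n)$. Your write-up simply makes explicit the unitary-orbit reduction and the realizability of the extremal intersection dimensions, which the paper leaves implicit.
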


We are interested in  properties of the fidelity of two density matrices when one or both are normalized orthogonal projections.  We have the following inequality.

\begin{proposition}\label{prop:proj} Let $S$ be an $m$-dimensional subspace of $H_d$.  Let $\rho$ be a density matrix with eigenvalues $\lambda_1\geq \lambda_2\geq ...\geq \lambda_{d-1}\geq \lambda_d$.  Then $\sum_{j=d-m+1}^{d}\sqrt{ \frac{\lambda_j}{m}}\le F(\rho, \rho_S)\le \sum_{j=1}^{m}\sqrt{\frac {\lambda_j}{m}}$.  For any fixed $\rho$, there are choices of $S$ for which $F(\rho, \rho_S)$ achieves the upper and lower bounds of the inequality respectively. \end{proposition}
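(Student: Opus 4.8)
The plan is to first reduce $F(\rho, \rho_S)$ to a sum of square roots of eigenvalues of a compression of $\rho$, and then to control those eigenvalues via eigenvalue interlacing. Since $P_S$ is an orthogonal projection, it is idempotent and positive, so $\sqrt{\rho_S} = \frac{1}{\sqrt m}P_S$. Substituting into Definition \ref{defn:fidelity} with $\sigma = \rho_S$ gives
\[
F(\rho, \rho_S) = \tr\left(\sqrt{\tfrac1{\sqrt m}P_S\,\rho\,\tfrac1{\sqrt m}P_S}\right) = \frac1{\sqrt m}\tr\left(\sqrt{P_S\rho P_S}\right),
\]
a reduction of the same flavour as the computation in the preceding Remark, but now requiring no commutativity hypothesis. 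The operator $P_S\rho P_S$ is positive semidefinite and supported on $S$ (it annihilates $S^\perp$ and has range contained in $S$), so its positive square root is likewise supported on $S$. Hence if $\mu_1 \ge \cdots \ge \mu_m$ denote the eigenvalues of the compression $P_S\rho P_S$ regarded as an operator on $S$, then
\[
F(\rho, \rho_S) = \frac1{\sqrt m}\sum_{j=1}^m \sqrt{\mu_j}.
\]

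Next I would invoke the Poincar\'e separation theorem (Cauchy interlacing for compressions): the eigenvalues $\mu_j$ of the compression of the Hermitian matrix $\rho$ to the $m$-dimensional subspace $S$ satisfy $\lambda_{d-m+j} \le \mu_j \le \lambda_j$ for each $j = 1, \dots, m$. Since the square root is monotone increasing, summing these inequalities and dividing by $\sqrt m$ (reindexing the lower sum via $k = d-m+j$) yields exactly the claimed bounds
\[
\sum_{j=d-m+1}^{d}\sqrt{\tfrac{\lambda_j}{m}} \le F(\rho, \rho_S) \le \sum_{j=1}^{m}\sqrt{\tfrac{\lambda_j}{m}}.
\]

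For sharpness, I would exhibit the extremal subspaces directly. Taking $S$ to be the span of the eigenvectors of $\rho$ associated with the $m$ largest eigenvalues $\lambda_1, \dots, \lambda_m$, the compression $P_S\rho P_S$ restricted to $S$ is diagonal with eigenvalues exactly $\lambda_1, \dots, \lambda_m$, so the upper bound is attained. Symmetrically, taking $S$ to be the span of the eigenvectors for the $m$ smallest eigenvalues $\lambda_{d-m+1}, \dots, \lambda_d$ makes the compression have eigenvalues $\lambda_{d-m+1}, \dots, \lambda_d$, attaining the lower bound.

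The only real content lies in the interlacing step; everything else is bookkeeping, and I do not expect a serious obstacle, since the reduction via $\sqrt{P_S} = P_S$ is immediate and the Poincar\'e separation theorem is precisely tailored to trap the eigenvalues of a compression between the extreme eigenvalues of the full matrix. The one point deserving a moment's care is verifying that $\tr(\sqrt{P_S\rho P_S})$ is genuinely the sum of the $\sqrt{\mu_j}$ and is not contaminated by the $d-m$ zero eigenvalues living on $S^\perp$; this follows because $\sqrt{0}=0$ and the positive square root preserves the support subspace $S$.
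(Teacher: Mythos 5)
Your proposal is correct and follows essentially the same route as the paper: write $F(\rho,\rho_S)=\frac{1}{\sqrt m}\tr\bigl((P_S\rho P_S)^{1/2}\bigr)$ using $\sqrt{\rho_S}=\frac{1}{\sqrt m}P_S$, apply Cauchy interlacing (Poincar\'e separation) to the eigenvalues of the compression, and attain the bounds with the spans of the eigenvectors for the $m$ largest and $m$ smallest eigenvalues. Your extra care about the zero eigenvalues on $S^\perp$ is a welcome detail the paper leaves implicit, but the argument is the same.
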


\begin{proof} Let $\mu_{1}\ge \mu_{2}\ge ...\ge \mu_{m}$ be the eigenvalues of $P_S\rho P_S$ when considered as an operator on $S$.  By Cauchy's interlacing theorem, we have $\lambda_{j+d-m} \le \mu_j \le \lambda_{j}$.  Since $F(\rho, \rho_S)=\frac{1}{\sqrt{m}} tr((P_S\rho P_S)^{\frac{1}{2}})$, the result follows.  We may attain the upper and lower bound of the inequalities by choosing $S$ to be the span of the eigenvectors corresponding respectively to the $m$ largest and $m$ smallest eigenvalues of $\rho$. \end{proof}

\begin{remark}
Proposition \ref{prop:minF} follows as a corollary of proposition \ref{prop:proj}. Indeed, pure states are rank-one projections, so the dimension of the set of all pure states $K$ is $m=1$. By concavity, the minimum will occur at the boundary of the state space $S(H)$, which is precisely the pure states. We thus obtain the lower bound of proposition \ref{prop:proj}: $(\lambda_{\min}(\rho))^{1/2}$, which is precisely the result of proposition  \ref{prop:minF}.
\end{remark}

If the dimensions of $S$ and $T$ are equal, we can obtain an interesting interpretation of $F(\rho_{S},\rho_{T})$ as the average of the cosine of the canonical angles between $S$ and $T$.  Before introducing our result, we remind the reader of the definition of the canonical angles between $S$ and $T$; this concept first appears in the work of Camille Jordan in 1875 \cite{J75}.

\begin{definition} Let $S$ and $T$ be finite dimensional subspaces of a Hilbert space $\mathcal{H}$ and let $\ell=\min\{\dim (S), \dim(T)\}$.   Then the first canonical angle is the unique number $\theta_1\in [0,\frac{\pi}{2}]$ such that $\cos(\theta_1)=\max \{ \vert \langle x,y\rangle\vert: x\in S, y\in T, \Vert x\Vert=\Vert y\Vert=1\}$.  Let $x_1$ and $y_1$ be unit vectors in $S$ and $T$ respectively where the previous maximum is attained.  Then we define the second canonical angle as the unique number $\theta_2\in [0,\frac{\pi}{2}]$ such that $\cos(\theta_2)=\max \{ \vert  \langle x,y\rangle\vert: x\in S, y\in T, \Vert x\Vert=\Vert y\Vert=1, x\perp x_1, y\perp y_1\}$. Let $x_2$ and $y_2$ be the unit vectors in $S$ and $T$ respectively where the previous maximum is attained.  Now for any $k\le \ell$, $\theta_k$ is the unique number such that $\cos(\theta_k)=\max \{   \vert\langle x,y\rangle\vert: x\in S, y\in T, \Vert x\Vert=\Vert y\Vert=1, x\perp x_1, x_2, \dots x_{k-1} , y\perp y_1, y_2, \dots y_{k-1}\}$. \end{definition}

Canonical angles are also called principal angles.  We will use a characterization of the canonical angles first given in \cite{BG73}.

\begin{theorem}\label{BG} Let $S$ and $T$ be   subspaces of a Hilbert space $H$ with dimensions $m$ and $n$ respectively, and let $Q_S$ and $Q_T$ be matrices whose column vectors are the elements of orthonormal bases of $S$ and $T$ respectively.  Then the cosine of the canonical angles are   the singular values of the matrix $Q_S^\dagger Q_T$:
\[
\cos(\theta_k)=\sigma^\downarrow_k(Q_S^\dagger Q_T),
\]
for all $k=1, \dots, \ell=\min\{m, n\}$.
 \end{theorem}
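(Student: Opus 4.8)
The plan is to reduce the recursive optimization defining the canonical angles to the standard variational characterization of the singular values of $M:=Q_S^\dagger Q_T$. The key observation is that, because the columns of $Q_S$ form an orthonormal basis of $S$, the map $a\mapsto Q_S a$ is an isometry from $\C^m$ onto $S$; likewise $b\mapsto Q_T b$ is an isometry from $\C^n$ onto $T$. Hence every unit vector $x\in S$ can be written uniquely as $x=Q_S a$ with $\|a\|=1$, and every unit vector $y\in T$ as $y=Q_T b$ with $\|b\|=1$.

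Under this parametrization I would compute $\langle x,y\rangle=\langle Q_S a,Q_T b\rangle=a^\dagger Q_S^\dagger Q_T b=a^\dagger M b$. Moreover, the orthogonality constraints translate faithfully: since $Q_S^\dagger Q_S=I_m$, we have $\langle x,x_j\rangle=a^\dagger a_j$, so $x\perp x_j$ if and only if $a\perp a_j$, and similarly $y\perp y_j$ if and only if $b\perp b_j$ using $Q_T^\dagger Q_T=I_n$. Consequently the defining formula for the first canonical angle becomes
\[
\cos\theta_1=\max\{|a^\dagger M b|:\|a\|=\|b\|=1\},
\]
which is exactly the variational expression for the largest singular value $\sigma^\downarrow_1(M)$, attained at the leading left and right singular vectors $a_1=u_1$, $b_1=v_1$.

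I would then proceed inductively. Having identified the first $k-1$ optimizers with the leading $k-1$ left/right singular vector pairs of $M$, the $k$-th canonical angle is obtained by maximizing $|a^\dagger M b|$ subject to $a\perp u_1,\dots,u_{k-1}$ and $b\perp v_1,\dots,v_{k-1}$. Writing $M=\sum_i\sigma_i u_i v_i^\dagger$ in its singular value decomposition and expanding $a,b$ in the singular bases, the restricted form collapses to $\sum_{i\ge k}\sigma_i\,\overline{\alpha_i}\beta_i$, whose maximum modulus over unit vectors is $\sigma^\downarrow_k(M)$, attained at $a_k=u_k$, $b_k=v_k$. This matches the recursion defining $\theta_k$ step for step, yielding $\cos\theta_k=\sigma^\downarrow_k(M)$ for all $k\le\ell=\min\{m,n\}$; note that $M$ is $m\times n$ and so has exactly $\ell$ singular values, matching the number of canonical angles.

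The main point requiring care is the inductive bookkeeping of the orthogonality constraints in the presence of repeated singular values, where the maximizers need not be unique. This does not affect the \emph{value} of the maximum, which is what the theorem asserts, but to keep the induction clean I would fix, at each stage, a specific singular-vector pair realizing the maximum, so that the constraints ``$a\perp a_1,\dots,a_{k-1}$'' and ``$a\perp u_1,\dots,u_{k-1}$'' coincide; the absolute values ensure everything is phase-independent, so the identification works equally in the real and complex settings.
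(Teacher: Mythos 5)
The paper does not prove this statement: it is quoted as a known result of Bj\"{o}rck and Golub \cite{BG73}, so there is no internal proof to compare against. Your argument is correct and is the standard one: since $Q_S^\dagger Q_S=I_m$ and $Q_T^\dagger Q_T=I_n$, the maps $a\mapsto Q_S a$ and $b\mapsto Q_T b$ are isometries onto $S$ and $T$, so the recursive optimization defining the canonical angles transports exactly (inner products, norms, and orthogonality constraints all carry over) to the recursive variational characterization of the singular values of $M=Q_S^\dagger Q_T$, which is $m\times n$ and hence has precisely $\ell=\min\{m,n\}$ singular values, matching the number of angles. The one point worth tightening is the issue you flag at the end: the paper's definition lets $x_k,y_k$ be \emph{any} maximizing pair, so rather than merely ``fixing'' a singular-vector pair at each stage, you should note that the equality analysis in the Cauchy--Schwarz estimate $|a^\dagger M b|\le\sum_i\sigma_i|\alpha_i||\beta_i|\le\sigma_1$ forces every maximizing pair to be a left/right singular-vector pair for the top singular value of the current restriction; then $M$ maps the orthogonal complement of the chosen right vector into that of the chosen left vector with singular values $\sigma_2\ge\sigma_3\ge\cdots$, and the induction goes through for any admissible choice. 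With that observation added, the proof is complete.
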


We are now ready to state our main result of this section.

\begin{theorem} \label{thm:Fcanangles} Let $S$ and $T$ be   subspaces of a Hilbert space $H$ with dimensions $m$ and $n$ respectively,  with $\ell=\min\{m, n\}$. Let $P_S$ and $P_T$ be the orthogonal projections onto $S$ and $T$ respectively and let $\rho_{S}=\frac1m P_{S}$ and $\rho_{T}=\frac1n P_{T}$. The fidelity $F(\rho_{S},\rho_{T})=\frac{1}{\sqrt{mn}}\sum_{k=1}^{\ell}\cos(\theta_k)$ where $\{ \theta_{k}\}_{k=1}^{\ell}$ are the canonical angles between $S$ and $T$.\end{theorem}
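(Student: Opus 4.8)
The plan is to reduce the fidelity to the trace of a square root of the single operator $P_S P_T P_S$, and then to identify the spectrum of that operator with the squared cosines of the canonical angles by invoking Theorem \ref{BG}. Since $P_S$ is an orthogonal projection it is idempotent, so $\rho_S^{1/2} = \frac{1}{\sqrt m} P_S$. Substituting directly into Definition \ref{defn:fidelity} gives
\[
F(\rho_S, \rho_T) = \tr\left( \left( \tfrac{1}{\sqrt m} P_S \cdot \tfrac1n P_T \cdot \tfrac{1}{\sqrt m} P_S \right)^{1/2} \right) = \frac{1}{\sqrt{mn}} \tr\left( (P_S P_T P_S)^{1/2} \right),
\]
so the theorem reduces to showing that $\tr\bigl((P_S P_T P_S)^{1/2}\bigr) = \sum_{k=1}^\ell \cos(\theta_k)$.

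Next I would identify the spectrum of $P_S P_T P_S$. Let $Q_S$ and $Q_T$ be matrices whose columns are orthonormal bases of $S$ and $T$, so that $P_S = Q_S Q_S^\dagger$ and $P_T = Q_T Q_T^\dagger$, and set $M = Q_S^\dagger Q_T$. Then $P_S P_T P_S = Q_S M M^\dagger Q_S^\dagger$, and since $Q_S^\dagger Q_S = I_m$, compressing to $S$ via $Q_S$ shows that the eigenvalues of $P_S P_T P_S$ as an operator on $H_d$ are exactly those of the $m \times m$ matrix $M M^\dagger$, together with the eigenvalue $0$ of multiplicity $d - m$ coming from $S^\perp$. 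The eigenvalues of $M M^\dagger$ are the squared singular values of $M = Q_S^\dagger Q_T$, which by Theorem \ref{BG} are precisely $\cos^2(\theta_1), \dots, \cos^2(\theta_\ell)$, padded with $m - \ell$ further zeros.

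Finally, taking the positive square root acts diagonally on the spectral decomposition, and since each $\theta_k \in [0, \tfrac{\pi}{2}]$ we have $\cos(\theta_k) \ge 0$, so $\sqrt{\cos^2(\theta_k)} = \cos(\theta_k)$ and the zero eigenvalues contribute nothing. Hence $\tr\bigl((P_S P_T P_S)^{1/2}\bigr) = \sum_{k=1}^\ell \cos(\theta_k)$, which combined with the reduction above yields the claimed formula. I expect the only delicate point to be the bookkeeping in the second step, namely verifying that compression by the isometry $Q_S$ transfers the entire spectrum of $P_S P_T P_S$ onto $M M^\dagger$ without creating or destroying eigenvalues, and then correctly accounting for the multiplicities of the zero eigenvalue so that the count is consistent whether $m \le n$ or $m > n$; this is routine once the identities $P_S = Q_S Q_S^\dagger$ and $Q_S^\dagger Q_S = I_m$ are in hand.
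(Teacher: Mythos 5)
Your proposal is correct and follows essentially the same route as the paper: both reduce the fidelity to the singular values of $Q_S^\dagger Q_T$ via the identities $P_S = Q_S Q_S^\dagger$, $P_T = Q_T Q_T^\dagger$, and then invoke Theorem \ref{BG}. Your use of the idempotency of $P_S$ to write $\rho_S^{1/2} = \frac{1}{\sqrt m}P_S$ and the compression by the isometry $Q_S$ makes the eigenvalue bookkeeping slightly more explicit than the paper's appeal to the fact that $AB$ and $BA$ share nonzero eigenvalues, but the substance is identical.
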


Note that if $\dim(S)=\dim(T)$, then $F(\rho_{S},\rho_{T})$ is the arithmetic mean of the cosines of the canonical angles.

\begin{proof} Note that $P_S=Q_SQ_S^\dagger $ and $P_T=Q_TQ_T^\dagger $ where $Q_S$ and $Q_T$ are any matrices whose column  vectors are the elements of orthonormal bases of $S$ and $T$ respectively.  Then $\rho_S^{\frac{1}{2}}\rho_T\rho_{S}^\frac{1}{2}$ is similar to $\frac{1}{mn}P_SP_T$ which is equal to $\frac{1}{mn}Q_SQ_S^\dagger  Q_TQ_T^\dagger $ which has the same non-zero eigenvalues as $\frac{1}{mn}(Q_S^\dagger  Q_T)(Q_S^\dagger Q_T)^\dagger $ with the same multiplicities.  Therefore the non-zero eigenvalues of $(\rho_S^{\frac{1}{2}}\rho_T\rho_{S}^\frac{1}{2})^\frac{1}{2}$ are exactly the same as the non-zero singular values of $\frac{1}{\sqrt{mn}}Q_S^\dagger Q_T$.  The result now follows from theorem \ref{BG}. \end{proof}

 The Bures angle between two states $\rho$ and $\sigma$ is $\arccos (F(\rho, \sigma))$. We note here that if the density matrices are normalized orthogonal projections onto subspaces $S$ and $T$ of the same dimension, then the cosine of the Bures angle between the two states is the arithmetic mean of the cosines of the canonical angles.

\section{Discussion}\label{sec:dis}
In this section we discuss connections between the work herein and resutls found elsewhere in the mathematics and quantum information theory literature.
\subsection{Rearrangement Inequality}\label{sec:rearr}

Theorem \ref{thm:ZF} is in fact a stronger version (in the sense that it deals with non-commutative operators) of the rearrangement inequality for non-negative numbers:
\begin{eqnarray}\label{eq:rearr}
x_ny_1+\cdots x_1y_n\leq x_{\sigma(1)}y_1+\cdots +x_{\sigma(n)}y_n\leq x_1y_1+\cdots x_ny_n
\end{eqnarray}
for any choice of real numbers
\begin{eqnarray}\label{eq:xy}
 x_1\leq \cdots\leq x_n\quad\textnormal{and}\quad y_1\leq \cdots\leq y_n
\end{eqnarray}
 and for any permutation $\sigma$ of $\{1, \dots, n\}$. If we have all strict inequalities in  (\ref{eq:xy}), then the lower bound of
the inequality (\ref{eq:rearr}) is attained only for the permutation that reverses the order, i.e.\ $\sigma(i)=n-i+1$ for all $i\in \{1, \dots, n\}$  and the upper bound is attained only for the identity $\sigma(i)=i$ for all $i\in \{1, \dots, n\}$.

Indeed, consider theorem \ref{thm:ZF} under the special case where both density matrices $\rho$ and $\sigma$ are diagonal and the unitaries are permutations. With this setup, we restate theorem \ref{thm:ZF}  as
\begin{eqnarray*}
\max_{U\in U(H_d)}(F(\rho, U\sigma U^\dagger))^2&=&\sum_{i=1}^n\lambda_i(\rho)\lambda_i(\sigma),
\end{eqnarray*}
which is the upper bound of inequality (\ref{eq:rearr})  and
\begin{eqnarray*}
\min_{U\in U(H_d)}(F(\rho, U\sigma U^\dagger))^2&=&\sum_{i=1}^n\lambda_i(\rho)\lambda_{n-i+1}(\sigma),
\end{eqnarray*}
which is the lower bound of inequality (\ref{eq:rearr}), with $x_i=\lambda_i(\rho)$ and $y_i=\lambda_i(\sigma)$.
All other permutation matrices $U$ just yield something in between these two bounds, thus giving inequality (\ref{eq:rearr}).

\subsection{The Spectral Geometric Mean}
 Let $A\# B=A^{1/2}(A^{-1/2}BA^{-1/2})^{1/2}A^{1/2}$ be the geometric mean between positive semidefinite matrices $A$ and $B$. In \cite{UC09}, the authors show that for bipartite states $\rho$ and $\sigma$, the fidelity of Alice's reduced states $\rho^A$ and  $\sigma^A$ is related to the geometric mean of Bob's reduced states  $\rho^B$ and  $\sigma^B$:
 \[
 F(\rho^A, \sigma^A)=\tr(\rho^B\# \sigma^B).
 % this is eqn 1.210 of UC09
 \]
 Here we show that for general states $\rho$ and $\sigma$ (not necessarily bipartite), their fidelity is intimately related to their \emph{spectral geometric mean}.

 The spectral geometric mean between positive semidefinite matrices $A$ and $B$ is given by $A\diamond  B =(A^{-1}\# B)^{1/2}A(A^{-1}\# B)^{1/2}$ \cite{FP97}, which has the useful feature that $(A\diamond B)^2 $ is similar to $AB$.

 We have, for  positive semidefinite matrices $A$ and $B$,
 \begin{eqnarray*}
 \tr(A\diamond B)&=&\tr((A^{-1/2}(A^{1/2}BA^{1/2})^{1/2}A^{-1/2})^{1/2}A(A^{-1/2}(A^{1/2}BA^{1/2})^{1/2}A^{-1/2})^{1/2})\\
 &=&\tr(A^{-1/2}(A^{1/2}BA^{1/2})^{1/2}A^{-1/2}A)\\
 &=&\tr((A^{1/2}BA^{1/2})^{1/2}).
 \end{eqnarray*}
 In particular, for density matrices $\rho$ and $\sigma$, we have
 \begin{eqnarray}
 F(\rho, \sigma)=\tr(\rho\diamond \sigma)
 \end{eqnarray}
 with $\rho\diamond \sigma$ similar to $\sqrt{\rho\sigma}$ (since  $(\rho\diamond \sigma)^2$ is similar to ${\rho\sigma}$).
 Re-writing the trace as a sum of eigenvalues, we have
 \begin{eqnarray}
  F(\rho, \sigma)=\sum_i\lambda_i(\rho\diamond\sigma)=\sum_i\lambda_i(\sqrt{\rho\sigma}).
 \end{eqnarray}

 \subsection{Maximum Fidelity}
 The \emph{maximum output fidelity} of two channels $\Phi, \Psi$ is define as \cite{Ros09, KW00}
 \[
 F_{\max}(\Phi, \Psi)=\max_{\rho, \sigma}F(\Psi(\rho), \Phi(\sigma)),
 \]
 where the maximum is taken over all density matrices $\rho$ and $\sigma$. This maximum fidelity is connected to the diamond norm $\|\cdot\|_\diamond$, the dual of the completely bounded norm, via the following lemma
 \begin{lemma}\cite{KW00}
Let $\Phi,\Psi:\mathfrak{B}(H)\rightarrow\mathfrak{B}(K) $  be quantum channels  with Stinespring dilations
\begin{eqnarray*}
\Phi(X)&=&\tr_BUXU^\dagger\\
\Psi(X)&=&\tr_BVXV^\dagger,
\end{eqnarray*}
where $U,V:H\rightarrow B\otimes K$ are unitaries.
Let $\Gamma$ be the linear map given by $\Gamma(X)=\tr_KUXV^\dagger$. Then $F_{\max}(\Phi, \Psi)=\|\Gamma\|_\diamond$.
 \end{lemma}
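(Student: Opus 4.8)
The plan is to obtain both $F_{\max}(\Phi,\Psi)$ and $\|\Gamma\|_\diamond$ as maxima of the \emph{same} bilinear expression, reaching the left-hand side through Uhlmann's variational formula for the fidelity and the right-hand side through trace-norm duality. Throughout I introduce a reference system $R$ isomorphic to $H$.

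First I would rewrite the fidelity of the outputs. Any input state $\sigma$ on $H$ can be written as $\sigma=\tr_R\ket{\phi}\bra{\phi}$ for a unit vector $\ket{\phi}\in H\otimes R$, and then $(U\otimes I_R)\ket{\phi}$ is a purification of $\Phi(\sigma)$ on $K$ with purifying system $B\otimes R$, since tracing out $B\otimes R$ returns $\tr_B(U\sigma U^\dagger)=\Phi(\sigma)$. Likewise $(V\otimes I_R)\ket{\psi}$ purifies $\Psi(\rho)$ for $\rho=\tr_R\ket{\psi}\bra{\psi}$. Because any two purifications sharing the purifying system $B\otimes R$ differ by a unitary on that system, Uhlmann's theorem gives
\[
F(\Psi(\rho),\Phi(\sigma))=\max_{W}\big|\langle(V\otimes I_R)\psi\,|\,\widetilde W\,(U\otimes I_R)\phi\rangle\big|,
\]
where $W$ runs over unitaries on $B\otimes R$ and $\widetilde W$ denotes $W$ acting on $B\otimes R$ and trivially on $K$. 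Maximizing additionally over $\ket{\psi}$ and $\ket{\phi}$ yields
\[
F_{\max}(\Phi,\Psi)=\max_{\psi,\phi,W}\big|\langle\psi|(V^\dagger\otimes I_R)\,\widetilde W\,(U\otimes I_R)|\phi\rangle\big|.
\]

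Next I would expand the diamond norm. By convexity the supremum defining $\|\Gamma\|_\diamond$ is attained on a rank-one input $\ket{\psi}\bra{\phi}$ with unit vectors $\ket{\psi},\ket{\phi}\in H\otimes R$, for which it suffices to take $R\cong H$. Using the duality $\|M\|_1=\max_{W}|\tr(WM)|$, where $W$ ranges over contractions, together with $(\Gamma\otimes\operatorname{id}_R)(\ket{\psi}\bra{\phi})=\tr_K\big[(U\otimes I_R)\ket{\psi}\bra{\phi}(V^\dagger\otimes I_R)\big]$, and unfolding the partial trace over $K$, I obtain
\[
\|\Gamma\|_\diamond=\max_{\psi,\phi,W}\big|\langle\phi|(V^\dagger\otimes I_R)\,\widetilde W\,(U\otimes I_R)|\psi\rangle\big|,
\]
with $W$ a contraction on $B\otimes R$. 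Comparing the two formulas, they are identical after swapping the dummy vectors $\ket{\psi}\leftrightarrow\ket{\phi}$, once we know the maximum over unitaries equals the maximum over contractions. This holds because the expression is a linear functional of $W$, namely $\tr(AW)$ for a suitable $A$, whose maximal modulus over the operator-norm unit ball equals $\|A\|_1$ and is attained at a unitary (polar decomposition); equivalently, the extreme points of that unit ball are the unitaries. Hence $F_{\max}(\Phi,\Psi)=\|\Gamma\|_\diamond$.

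The hard part will be the tensor-factor bookkeeping needed to align the two bilinear expressions: verifying that the unitary $W$ on the purifying system $B\otimes R$ appearing in Uhlmann's theorem is literally the same object as the contraction $W$ dual to the trace norm, and carefully tracking which factors ($K$, $B$, or $R$) each operator acts on when unfolding $\tr_K$. The two auxiliary reductions — restricting the diamond norm to rank-one inputs and interchanging contractions with unitaries — are routine but must be stated to make the identification rigorous.
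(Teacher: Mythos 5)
The paper gives no proof of this lemma --- it is quoted verbatim from the cited Kitaev--Watrous reference --- so there is nothing internal to compare against; measured against the standard argument in that source, your proof is correct and is essentially the same: express $F(\Psi(\rho),\Phi(\sigma))$ via Uhlmann's theorem with purifications $(U\otimes I_R)\ket{\phi}$ and $(V\otimes I_R)\ket{\psi}$, express $\|\Gamma\|_\diamond$ via rank-one inputs and the duality $\|M\|_1=\max_{\|W\|\le 1}|\tr(WM)|$, and observe that both reduce to $\max_{\psi,\phi,W}|\bra{\psi}(V^\dagger\otimes I_R)\widetilde W(U\otimes I_R)\ket{\phi}|$ with $W$ unitary on $B\otimes R$ (the contraction/unitary interchange being justified by the polar decomposition). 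The bookkeeping you flag as the ``hard part'' does go through: $\dim(B\otimes R)\ge\dim K$ since $U$ is unitary and $R\cong H$, so the purifying system is large enough for Uhlmann's theorem in the form you use it.
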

Taking the maximum allows one to interpret fidelity in terms of the diamond norm. This interpretation has been used in \cite{Ros09, KW00} with respect to quantum interactive proof systems.

However, $F_{\max}(\Phi, \Psi)=1$ whenever the ranges of the two channels overlap. Thus, we propose $F_{\min}(\Phi, \Psi)$ as a more informative measure of distance between two channels, in the sense that it will only give 1 when the channels are equal, allowing for more useful comparisons between channels.

\section*{Acknowledgements}
R.P.\ was supported by NSERC Discovery Grant number 400550. S.P.\ was supported by NSERC Discovery Grant number 1174582. R.P.\ and S.P.\  wish to acknowledge the Sanya International Mathematics Forum, which hosted the twelveth Workshop on Numerical Ranges and Numerical Radii, where this work was initiated.  The authors would like to thank Dr.~Lin Zhang for many helpful suggestions about references.

\end{document}